\newtheorem*{thm*}{Theorem}
\newtheorem*{cor*}{Corollary}
\newtheorem*{lem*}{Lemma}
\let\Im\relax
\DeclareMathOperator{\Im}{Im}
\DeclareMathOperator{\Ker}{Ker}
\begin{document}

\title{Dimensionality reduction and spectral properties of multilayer networks}

\author{Rub\'en J. S\'anchez-Garc\'\i{}a}
\affiliation{Mathematical Sciences, University of Southampton, Highfield, Southampton SO17 1BJ, U.K.}
\email{R.Sanchez-Garcia@soton.ac.uk}

\author{Emanuele Cozzo}
\author{Yamir Moreno}
\affiliation{Institute for Biocomputation and Physics of Complex Systems (BIFI), Universidad de Zaragoza, 50018 Zaragoza, Spain}
\email{[yamir.moreno, emcozzo]@gmail.com}

\date{\today}

\begin{abstract}
Network representations are useful for describing the structure of a large variety of complex systems. Although most studies of real-world networks suppose that nodes are connected by only a single type of edge, most natural and engineered systems include multiple subsystems and layers of connectivity.  This new paradigm has attracted a great deal of attention and one fundamental challenge is to characterize multilayer networks both structurally and dynamically. One way to address this question is to study the spectral properties of such networks. Here, we apply the framework of graph quotients, which occurs naturally in this context, and the associated eigenvalue interlacing results, to the adjacency and Laplacian matrices of undirected multilayer networks. Specifically, we describe relationships between the eigenvalue spectra of multilayer networks and their two most natural quotients, the network of layers and the aggregate network, and show the dynamical implications of working with either of the two simplified representations. Our work thus contributes in particular to the study of dynamical processes whose critical properties are determined by the spectral properties of the underlying network.
\end{abstract}

\pacs{89.75.Hc,89.20.-a,89.75.Kd}

\maketitle

\section{Introduction}\label{section:Intro}
Network theory has demonstrated to be an invaluable tool for studying complex system, i.e. systems composed of a large number of interacting elements. In particular, by analysing the spectral properties of the adjacency and Laplacian matrix of a network is possible to gain insight on the structure and dynamics occurring on the network \cite{vanmieghem2012complexspectra, brouwer2012spectra}. However, most natural and engineered complex systems occur in interaction with other complex systems and hence are better described by a multilayer network \cite{KivelaReview}. One can distinguish different types of multilayer networks depending on the interaction between the different systems (layers). 
For example, a \emph{multiplex network} is composed by elements that interact trough different channels. Each channel of interaction is represented by a layer, and the connections between different layers correspond to elements present in more than one layer simultaneously, so that in this case the intra-layer and inter-layer interactions represent indeed different kinds of relations.

Multilayer networks have attracted a lot of attention recently \cite{KivelaReview}, and many different structural and dynamical features of multilayer networks have been studied \cite{Barret2012sociality,DeDomenico2013tensorial,Baxter2012avalanche, Cardillo2013emergence,Cozzo2013clustering, Brummit2012cascades, Cozzo2012boolean}, demonstrating that the behavior of interacting complex systems is very different from a simple combination of the isolated cases. In this work we argue that the mathematical concept of \emph{quotient graph} (see Section \ref{section:NetworkQuotients} or \cite{haemers1995interlacing}) underpins the notion of multilayer network and gives fundamental insights into the structure and properties of the network, in particular its spectral properties. 

In the first part of this paper, we apply eigenvalue interlacing \cite{haemers1995interlacing} to the adjacency and Laplacian eigenvalues of multilayer network quotients and subnetworks. For the interlacing to hold for Laplacian eigenvalues, we define an appropriate notion of quotient Laplacian, and relate its eigenvalues to a Laplacian of the quotient graph. 
In the second part of the paper, we describe implications of the spectral results to the structure and dynamical processes on a multilayer network. In particular, we show how the pattern of connections between layers constraints the dynamics on the whole system. 
Our results agree with other methodologies such as perturbative analysis \cite{Sole-ribalta2013laplacianspectra,Radicchi2013abrupttransition} and put these and other results in a more rigorous framework.

A network quotient can be seen as a coarsening, reduction or simplification of the original network. In this sense our spectral results quantify the information loss about the eigenvalue spectrum resulting from this reduction process, expressed as certain eigenvalue inequalities. 

We define two natural quotients for a multilayer network: the network of layers, which represents the connection pattern between layers; and the aggregate network, which results from the projection of all layers onto an aggregated single-layer network (Fig.~\ref{fig2}). In addition, we consider each layer as a separate (sub)network. We then relate 
their adjacency and Laplacian eigenvalues to those of the whole multilayer network, as an interlacing result in the most general case, and as a lifting result is there is enough regularity in the connectivity patterns. We also consider the layer subnetworks, as their eigenvalues are related to the multilayer eigenvalues in a similar fashion. See Table \ref{table_summary} for a brief summary of the analytical results. The quotient point of view that we present also suggests a very concrete notion of aggregate network among the ones proposed in the literature \cite{Sole-ribalta2013laplacianspectra, Battiston2013metrics,DeDomenico2013tensorial}.

\begin{figure}
\includegraphics[width=\columnwidth]{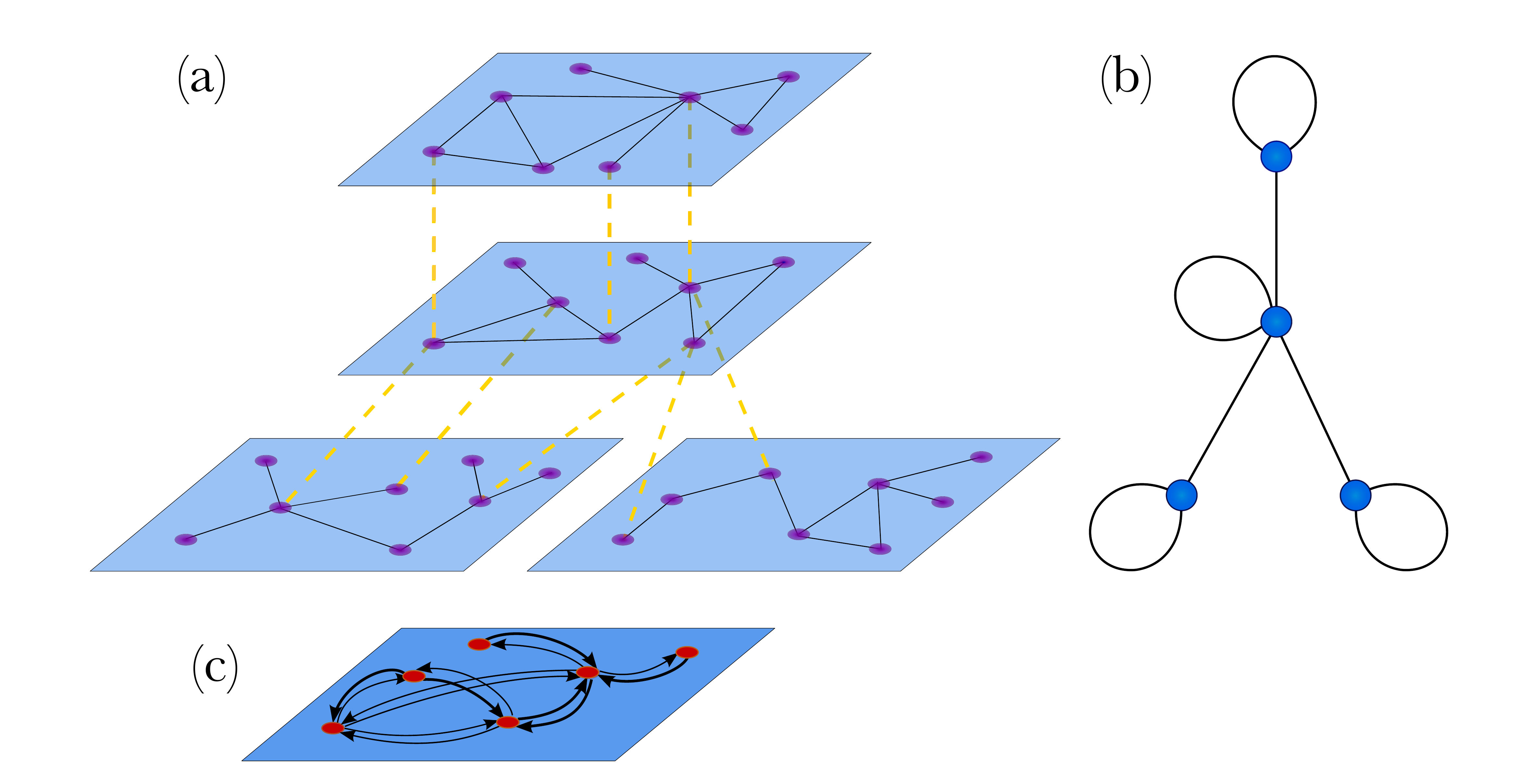}
\caption{(Color online) Schematic representation of a multilayer network with 4 layers and 8 nodes per layer (a), and its two quotients: the network of layers (b), and the aggregated network (c). In (a), dashed lines represent inter-layer edges. The quotient (b) is undirected, as all layers have the same number of nodes (see Eq.~\ref{eqn:NoL_weights}). The quotient (c) is only partially drawn, it is directed, and the edge thickness is proportional to the weight (Eq.~\ref{eqn:aggregate_weights}). 
The network of layers (b) corresponds to the layer interconnection structure, while the aggregate network (c) represents the superposition of all the layers onto one. In this sense, they can be thought of as `horizontal' and `vertical' quotients, as the figure suggests. Both quotients clearly represent a dimensionality reduction or coarsening of the original multilayer network.}
\label{fig2}
\end{figure}

\section{Mathematical background}\label{section:MathBackg}
We give a self-contained description of network quotients and interlacing results (deferring proofs to the Appendix), including regular quotients, and subnetworks. All the material presented here is well-known \cite{brouwer2012spectra}, except, as far as we know, the definition of quotient Laplacian (Eq.~\ref{eqn:QuotientLaplacian}) and its relation to the Laplacian of its quotient network.

\subsection{Adjacency and Laplacian matrices}
We represent an undirected network (or \emph{graph}) $\mathcal G$ on $n$ nodes by its adjacency matrix $A_{\mathcal G}=(a_{ij})$: $a_{ij}\neq 0$ represents an edge between nodes $i$ and $j$ with weight $a_{ij}$, while $a_{ij}=0$ if there is no such edge. Note that we allow positive and negative weights, and self loops ($a_{ii}\neq 0$). Any $n\times n$ real symmetric matrix is the adjacency matrix of such a network.

If the weights satisfy $a_{ij}=a_{ji}\ge 0$, we define the \emph{Laplacian matrix} as $L_{\mathcal G} = D-A_{\mathcal G}$, where $D=\text{diag}(d_1,\ldots,d_n)$ is the diagonal matrix of the node degrees
\begin{equation}
	d_i = \sum_{j=1}^n a_{ij} =\sum_{j=1}^n a_{ji}.
\end{equation}
(In this manuscript, by \emph{degree} we will always refer to weighted node degree as defined above.)

\subsection{Interlacing}
In this paper, we relate the adjacency and Laplacian eigenvalues of a multilayer network to two quotient networks that occur naturally. 
The main theoretical result that we will exploit is that the eigenvalues of a quotient interlace the eigenvalues of its parent network. Let $m<n$ and consider two sets of real numbers 
\[
	\mu_1 \le \ldots \le \mu_m \ \text{ and }\  \lambda_1 \le \ldots \le \lambda_n.
\] 
We say that the first set \emph{interlaces} the second if 
\[
	\lambda_i \le \mu_i \le \lambda_{i+(n-m)} \quad \text{ for } i=1, \ldots, m. 
\]

\subsection{Network quotients}\label{section:NetworkQuotients}
Suppose that $\{V_1, \ldots, V_m\}$ is a partition of the node set of a network $\mathcal{G}$ with adjacency matrix $A_\mathcal{G}$, and write $n_i=|V_i|$. The subnetwork represented by $V_i$ can be thought of as a cluster, community, or layer, for example. 

The \emph{quotient network} $\mathcal{Q}$ of $\mathcal{G}$ is a coarsening of the network with respect to the partition. It has one node per cluster $V_i$, and an edge from $V_i$ to $V_j$ weighted by an average connectivity from $V_i$ to $V_j$
\begin{equation}
	b_{ij} = \frac{1}{\sigma} \sum_{\substack{k \in V_i\\ l \in V_j}} a_{kl},
\end{equation}
where we have a choice for the size parameter $\sigma$: we will use either $\sigma_i=n_i$, or $\sigma_j=n_j$, or $\sigma_{ij}=\sqrt{n_i}\sqrt{n_j}$. We call the corresponding network the \emph{left quotient}, the \emph{right quotient} and the \emph{symmetric quotient} respectively. Fortunately, the matrix $B=(b_{ij})$ has the same eigenvalues for the three choices of $\sigma$ (see Appendix \ref{section:QuotientMatrix}). We refer by \emph{quotient network} to any of these three spectrally-equivalent networks with adjacency matrix $B$. Observe that the symmetric quotient is undirected, while the left and right quotients are not, unless all clusters have the same size, $n_i=n_j$ for all $i,j$.

The key spectral result is that the adjacency eigenvalues of a quotient network interlace the adjacency eigenvalues of the parent network (see Appendix \ref{appendix:interlacing} for a proof). The same result applies for Laplacian eigenvalues, if the Laplacian matrix of the quotient is defined appropriately, as explained below.

Consider the left quotient of $A$ with respect to the partition. Observe that the row sums of $Q_l(A)$ are
\begin{equation}
	\overline{d_i} =\frac{1}{n_i} \sum_{k \in V_i} d_k,
\end{equation}
the average node degree in $V_i$.  Let $\overline{D}$ be the diagonal matrix of the average node degrees. Then we define the \emph{quotient Laplacian} as the matrix 
\begin{equation}\label{eqn:QuotientLaplacian}
	L_\mathcal{Q} = \overline{D} - Q_l(A).
\end{equation}
(See Appendix \ref{appendix:Laplacian_quotients} for a full discussion on this choice.) With this definition, the Laplacian eigenvalues of the quotient network interlace the Laplacian eigenvalues of the parent network (see the Theorem in Appendix \ref{appendix:Laplacian_quotients}).

Let $\widetilde{\mathcal Q}$ be the \emph{loopless quotient} of $\mathcal G$, that is, the quotient network $\mathcal Q$ with all the self-loops removed. As the quotient Laplacian ignores self-loops (see Appendix \ref{appendix:Laplacian_quotients}), we have $L_{\mathcal Q}=L_{\widetilde{\mathcal Q}}$, and the interlacing result also holds for the loopless quotient.

%

\subsection{Regular quotients}\label{section:regularquotients}
A partition of the node set $\{V_1,\ldots,V_m \}$ is called \emph{equitable} if the number of edges (taking weights into account) from a node in $V_i$ to any node in $V_j$ is independent of the chosen node in $V_i$ 
\begin{equation}\label{eqn:regularity}
	\sum_{\substack{l \in V_j}} a_{kl} = \sum_{\substack{l \in V_j}} a_{k'l} \quad \text{ for all } k,k'\in V_i,
\end{equation}
for all $i,j$. This indicates a regularity condition on the connection pattern between (and within) clusters. If the partition is equitable, we call the quotient network \emph{regular}. A source of regular quotients are network symmetries 
\cite{MacArthur2008Symmetry,MacArthur2009Spectral}. For a toy example of a regular quotient, see Table \ref{table_adj}.

If the quotient is regular, the adjacency eigenvalues of $\mathcal Q$ not only interlace, but are a subset of the adjacency eigenvalues of $\mathcal G$ and, moreover, we can find an eigenbasis of $\mathcal G$ consisting on $m$ eigenvectors of the quotient \emph{lifted} to $\mathcal G$ (by repeating the coordinates on each cluster), and the other $n-m$ eigenvectors \emph{orthogonal} to the partition (the sum of the coordinates on each layer is zero); see Table \ref{table_adj} and Appendix \ref{appendix:equitable}. We refer to this spectral result as \emph{lifting}.

\begin{table}[!t]
\begin{center}
\begin{ruledtabular}
\begin{tabular}[c]{c  c  c}
Network & Eigenvalues & Eigenvectors \\
\hline 

\raisebox{-0.4cm}{\includegraphics[width=0.15\textwidth]{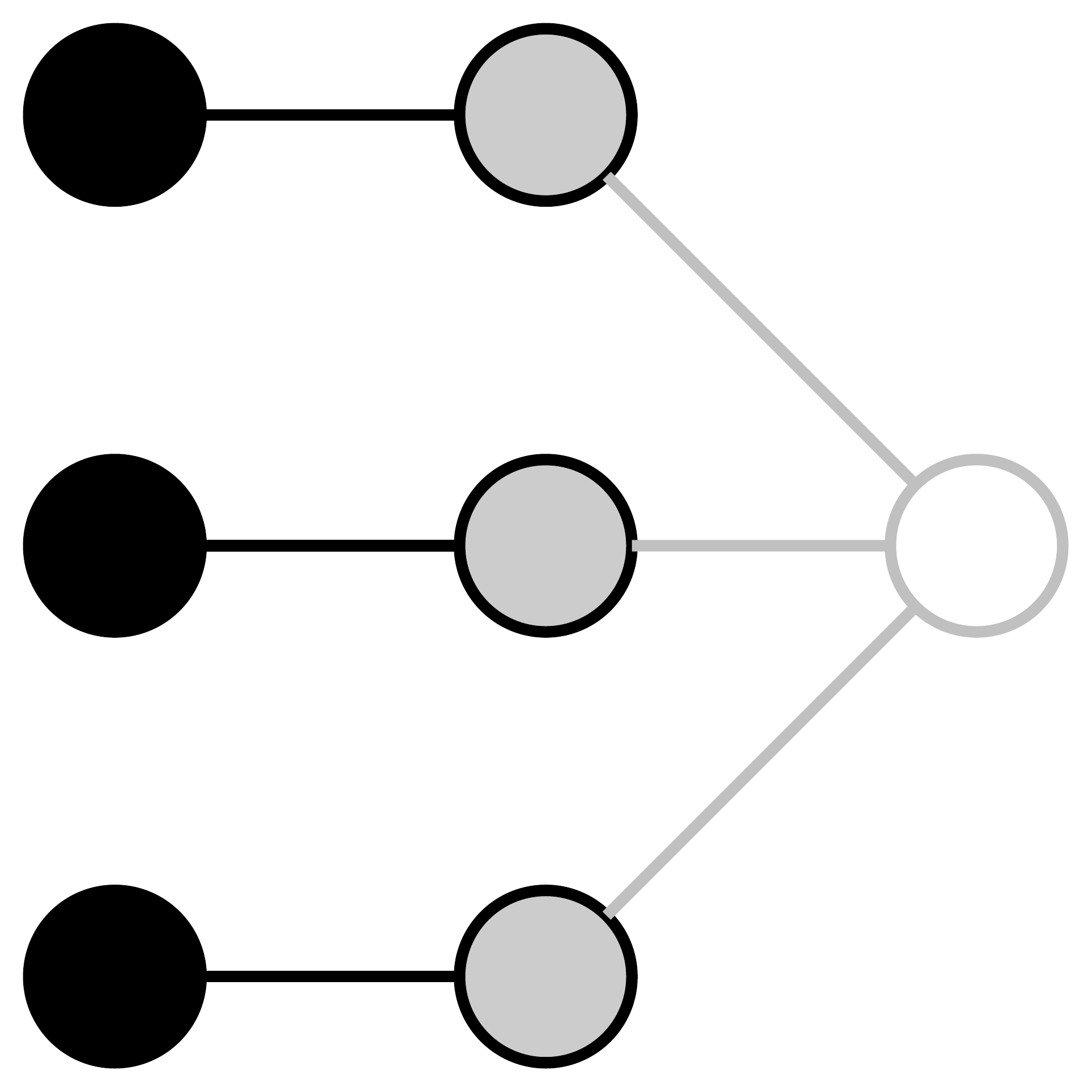}} & \begin{minipage}[b]{0.6cm} \footnotesize{\phantom{x} $1$\\ $1$ $-1$ $-1$ $2$ $-2$ $0$} \end{minipage} & \raisebox{-0.0cm}{\begin{minipage}[b]{4.0cm} \footnotesize{$(1, -1, 0, | 1, -1, 0, | 0)$ $(1, 0, -1, | 1, 0, -1, | 0)$ $(1, -1, 0, | -1, 1, 0, | 0)$ $(1, 0, -1, | -1, 0, 1, | 0)$ $(1, 1, 1, | 2, 2, 2, | 3)$ $(1, 1, 1, | -2, -2,  -2, | 3)$ $(1, 1, 1, | 0, 0, 0, | -1)$} \end{minipage}}\\

\hline 

\raisebox{0.17cm}{\includegraphics[scale=0.14]{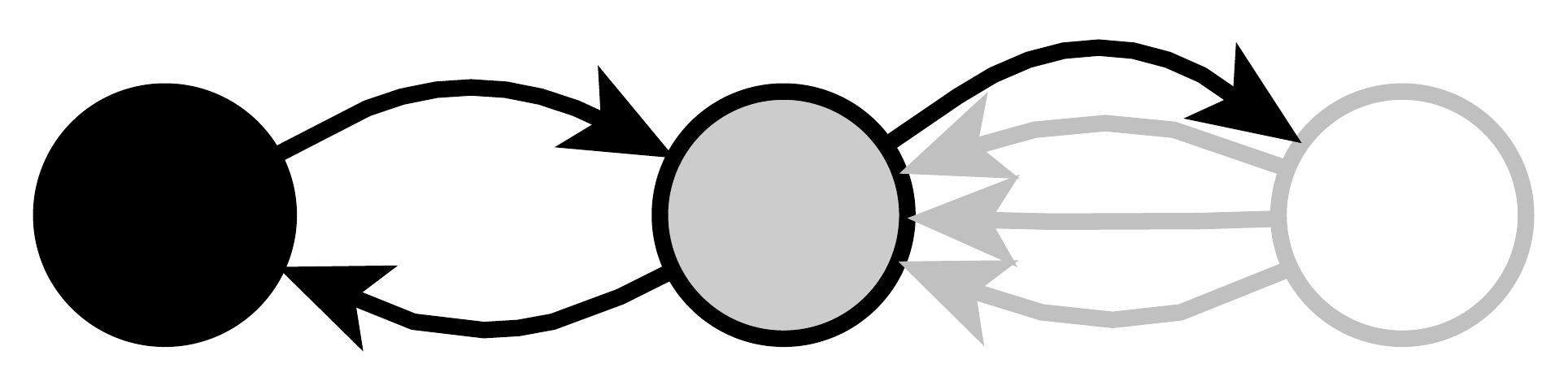}} & \raisebox{-0.0cm}{\begin{minipage}[b]{0.5cm} \footnotesize{\phantom{x} $2$ $-2$ $0$} \end{minipage}} & \raisebox{-0.0cm}{\begin{minipage}[b]{4.0cm} \footnotesize{$(1, 2, 3)$\\ $(1, -2, 3)$\\ $(1, 0, -1)$} \end{minipage}}\\

\end{tabular}
\end{ruledtabular}
\end{center}
\caption{\label{table_adj}\small{\textbf{Example of a regular quotient} (adapted from \cite{MacArthur2009Spectral}). We show the adjacency eigenvalues and a basis of eigenvectors for a simple network and a regular quotient. The colouring indicates the node set partition in three layers (represented vertically). Eigenvector entries on each layer are separated by vertical bars for convenience. Note that the spectrum of the quotient is a subset of the spectrum of the parent network. Moreover, the eigenbasis of the parent network consists of three eigenvectors of the quotient lifted to the parent graph (repeated coordinates on each layer) and the other eigenvectors are orthogonal to the partition (the sum of the coordinates on each layer is zero). The analogous result applies for the Laplacian eigenvalues, even if we add arbitrary intra-layer edges (almost regular quotient).}} 
\end{table}

For the Laplacian eigenvalues, the situation is somewhat simpler. We call a partition \emph{almost equitable} if condition (\ref{eqn:regularity}) is satisfied for all $i\neq j$ (but not necessarily for $i=j$), that is, if the regularity condition is satisfied after ignoring the intra-cluster edges.
In this case, we call the quotient graph $\mathcal Q$ \emph{almost regular}. Note that the quotient $\mathcal Q$ being almost regular is equivalent to the loopless quotient $\widetilde{\mathcal Q}$ being regular. 

The main result is that, if the quotient graph $\mathcal Q$ is almost regular, then the Laplacian eigenvalues of $\mathcal{Q}$ are a subset of the Laplacian eigenvalues of $\mathcal G$, and we can find a Laplacian eigenbasis of $\mathcal G$ consisting of $m$ Laplacian eigenvectors of the quotient ($\mathcal Q$ or $\widetilde{\mathcal Q}$) lifted to $\mathcal G$, and the other $n-m$ eigenvectors orthogonal to the partition (see Appendix \ref{appendix:Laplacian} for a proof). That is, we have a \emph{lifting} result for the Laplacian eigenvalues.

\subsection{Subnetworks}\label{section:Subnetworks}
Similar interlacing results apply when $B$ is a principal submatrix of $A$. If $A_\mathcal{G}$ is the adjacency matrix of a graph, a principal submatrix is the adjacency matrix of an induced subgraph. An \emph{induced subgraph} is a graph consisting on a subset of nodes and all the links between them. In contrast, a \emph{factor subgraph} consists on all the nodes and a subset of the links. A \emph{general subgraph} consists then of a subset of the nodes and a subset of the links between them.

For induced subgraphs, the adjacency eigenvalues of a induced subnetwork interlace the adjacency eigenvalues of the network (see Appendix \ref{appendix:interlacing}). For the Laplacian eigenvalues, only one of the interlacing inequalities hold, although this interlacing applies to general subgraphs, not necessarily induced. Namely, if $\lambda_1 \le \ldots \le \lambda_n$ are the Laplacian eigenvalues of a graph on $n$ vertices, and $\mu_1 \le \ldots \le \mu_m$ are the Laplacian eigenvalues of general subgraph on $m$ vertices, then 
\begin{equation}
	\mu_i \le \lambda_{i+(n-m)} \quad \text{for all } 1\le i \le m.
\end{equation} 
(See Appendix \ref{appendix:Laplacian_subnetworks} for a proof.)

These and the other spectral results are summarized on Table \ref{table_summary}.

\begin{table}[!t]
\begin{center}
\begin{ruledtabular}
\begin{tabular}{c|cc}
 & Adj.~eigenvalues & Laplacian eigenvalues\\
 \hline
 Quotient & \textit{interlacing} & \textit{interlacing}\\
 Almost reg.~quotient & \textit{interlacing} & \textit{lifting}\\
 Regular quotient & \textit{lifting} & \textit{lifting}\\
 \hline
 General subnetwork & --- & \textit{partial interlacing}\\
 Induced subnetwork & \textit{interlacing} & \textit{partial interlacing}
\end{tabular}
\end{ruledtabular}
\end{center}
\caption{\label{table_summary}\small{\textbf{Summary of spectral results}. This summarizes the spectral results in Section \ref{section:MathBackg}}, see main text for details.  
In the context of multilayer networks: The network of layers and aggregate network are examples of quotients; Regularity is a very strong condition for both quotients, but almost regularity may be satisfied by the network of layers is certain cases e.g. a layer-coupled multiplex \cite{KivelaReview}; the layer subnetworks are induced subnetworks.}
\end{table}

\section{Multilayer network quotients and spectra}\label{section:multilayerquotients}
Now we turn to exploit the spectral results on quotient networks in the framework of multilayer networks. After introducing the multilayer network formalism, we discuss two naturally occurring quotients: the network of layers and the aggregate network. We also discuss layer subnetworks, as similar interlacing results apply. This is not surprising, as subnetworks and quotients are dual concepts in some abstract categorical sense \cite{maclane1998categories}. 

Note that analogous results to those presented here will apply to arbitrary quotients or subnetworks on a multilayer network, and we only focus on the most natural ones. 
For the remainder, we implicitly assume the use of left quotients $Q(A)=Q_l(A)$ (cf.~Section \ref{section:NetworkQuotients}).

\subsection{Multilayer network formalism}\label{appendix:MultilayerFormalism}
We adopt the language and formalism of \cite{KivelaReview}. In most generality, a \emph{multilayer network} is a quadruplet $\mathcal M = (V_\mathcal{M}, E_\mathcal{M}, V, \mathbf{L})$ where $V$ is a set of nodes, $\mathbf{L}=\{L_a\}_{a=1}^d$ is a sequence of sets of layers, $V_\mathcal{M} \subseteq V \times \prod_{a=1}^dL_a$ are the multilayer network nodes (an element $(u,\alpha) \in V_\mathcal{M}$ represents node $u\in V$ in layer $\alpha$), and $E_\mathcal{M} \subseteq V_\mathcal{M} \times V_\mathcal{M}$ are the multilayer network edges. For simplicity, we assume from now on $d=1$, so there is only one set of layers $L$.

%

The pair $\mathcal G_{\mathcal{M}}=(V_\mathcal{M}, E_\mathcal{M})$ is a graph called the \emph{underlying graph} of the multilayer network. The \emph{supra-adjacency matrix} of $\mathcal{M}$ is the adjacency matrix of this graph. 
Besides, each layer can be considered as a subgraph $\mathcal{G}_\alpha =(V_\alpha, E_\alpha)$, where
\begin{eqnarray}
	V_\alpha &=&\{\left( u,\alpha \right) \in V_\mathcal{M}\}, \\
	E_\alpha &=& \{ \left( (u,\alpha), (v, \alpha) \right) \in E_\mathcal{M}\},
\end{eqnarray}
for each $\alpha \in L$. We write $A_\alpha$ for the adjacency matrix of $\mathcal{G}_\alpha$. The supra-adjacency matrix $A_{\mathcal M}=A_{\mathcal{G}_\mathcal{M}}$ has the matrices $A_\alpha$ as diagonal blocks, while the off-diagonal blocks $A_{\alpha\beta}$ represent inter-layer connectivity. 

Finally, we define the \emph{supra-Laplacian matrix} as the Laplacian of the underlying graph $L_{\mathcal M} = L_\mathcal{G_\mathcal{M}}$.

\subsection{Network of layers}\label{section:NoL}
The layers of a multilayer network partition the node set, so it is reasonable to consider the quotient induced by this partition. Let $\{V_1,\dots V_m\}$ be the partition of the multilayer node set by the layers, and $n_\alpha=\,|V_\alpha \,|$. 
Define the \emph{average inter-layer degree} from $\alpha$ to $\beta$ as
\begin{equation}\label{eqn:NoL_weights}
	d^{\alpha\beta} = \frac{1}{n_\alpha} \sum_{\substack{i \in V_\alpha\\ j\in V_\beta}} a_{ij}\,.
\end{equation}
This represents the average connectivity from a node in $\mathcal{G}_\alpha$ to any node in $\mathcal{G}_\beta$. 
If $\alpha=\beta$ we write $d^\alpha$ for $d^{\alpha\alpha}$, and call it the \emph{average intra-layer degree}.

Consider the quotient with respect to the partition given by the layers, that is, the (directed) network with adjacency matrix $(d^{\alpha\beta})$. We call this quotient the \emph{network of layers}.
Each node corresponds to a layer, with a self loop weighted by the average intra-layer degree $d^\alpha$, and there is a directed edge from layer $\alpha$ to layer $\beta$ weighted by the average inter-layer degree $d^{\alpha\beta}$.

Alternatively, we could consider the spectrally equivalent symmetric quotient, by replacing $1/n_\alpha$ by $1/(\sqrt{n_\alpha}\sqrt{n_\beta})$ in Eq.~\ref{eqn:NoL_weights}, see Section \ref{section:NetworkQuotients}. The network of layers will also be undirected if each layer contains the same number of nodes.

Applying the spectral results of Section \ref{section:NetworkQuotients}, we conclude that the adjacency, respectively Laplacian, eigenvalues of the network of layers interlace the adjacency, respectively Laplacian, eigenvalues of the multilayer network. Namely, if $\mu_1, \ldots, \mu_m$ are the (adjacency resp.~Laplacian) eigenvalues of the network of layers, then 
\begin{equation}
	\lambda_i \le \mu_i \le \lambda_{i+(n-m)} \quad \text{ for } i=1, \ldots, m, 
\end{equation}
where $\lambda_1, \ldots, \lambda_n$ are the (adjacency resp.~Laplacian) eigenvalues of the multilayer network.

The network of layers, ignoring weights and self-loops, simply represents the layer connection configuration (Fig.~\ref{fig2}). The connectivity of this reduced representation, measured in terms of the eigenvalues, thus relates to the connectivity of the entire multilayer network via the interlacing results. 

We turn to the question of when the layer partition is equitable. This requires, in particular, that the intra-layer degrees are constant, that is, each layer must be a $d^\alpha$-regular graph, a very strong condition unlikely to be satisfied in real-world multiplexes. Instead, we call a multilayer network \emph{regular} if the layer partition is almost equitable, that is, the inter-layer connections are independent of the chosen vertices. This is a more natural condition, and examples of inter-layer connections which give rise to regular multilayers are all-to-all, empty or one-to-one connections with homogeneous weights \cite{KivelaReview}. 

If the multilayer network is regular then, in addition to the interlacing, the Laplacian eigenvalues of the network of layers are a subset of the Laplacian eigenvalues of the multiplex, and we can lift a Laplacian eigenbasis of the quotient, as described in Section \ref{section:NetworkQuotients}. This latter result has also been derived in \cite{Sole-ribalta2013laplacianspectra}.

\subsection{Aggregate network}\label{section:aggregate}
The multilayer network formalism also includes information about nodes representing the same entity in several layers: given $u \in V$, we think of $(u,\alpha)$ and $(u,\beta)$ (if they are both multilayer nodes) as two nodes representing the same entity in two layers $\alpha \neq \beta$. This allows a second notion of quotient, the \emph{aggregate network}.

The aggregate network is obtained by identifying nodes representing the same `actor' or `component' in different layers (e.g.~same user in two social networks; same hub in different transport networks; multiplexes describing time series \cite{KivelaReview}). This identification also makes sense for interdependent networks where the functioning of a node in a layer critically depends on the functioning of another node in another layer and vice versa \cite{Vespignani2010interdependent}. Several candidates for this aggregate network  have been proposed in the literature such as the average network \cite{Sole-ribalta2013laplacianspectra}, the overlapping network \cite{Battiston2013metrics} the projected monoplex network \cite{DeDomenico2013tensorial} or the overlay network \cite{DeDomenico2013tensorial}. We claim that the natural definition of an aggregate network is given by the suitable notion of quotient network, as follows. 

We define a \emph{supra-node} as the set of nodes representing the same object 
\begin{equation}
	\widetilde{u} = \{ (u, \alpha) \in V_\mathcal{M} \,|\, \alpha \in L\}.
\end{equation}
Note that not every node is present in every layer, and $\widetilde{u}$ may have cardinality 1. We call $\kappa_{\widetilde{u}}=|\widetilde{u}|$ the \emph{multiplexity degree} of the supra-node $\widetilde{u}$, that is, the number of layers in which an instance of the same object $u$ appears. We also define the \emph{average connectivity} between supra-nodes $\widetilde{u}$ and $\widetilde{v}$ as
\begin{equation}\label{eqn:aggregate_weights}
	d_{\widetilde{u}\widetilde{v}} = \frac{1}{k_{\widetilde{u}}} \,\sum_{\substack{i \in \widetilde{u}\\ j\in\widetilde{v}}} a_{ij},
\end{equation}
and write $d_{\widetilde{u}}$ for $d_{\widetilde{u}\widetilde{u}}$.
   

Observe that the super-nodes partition the multilayer node set. We define the \emph{aggregate network} as the quotient associated with this partition. Each node in this quotient corresponds to a supra-node, with a self-loop weighted by $d_{\widetilde{u}}$, and a directed edge from $\widetilde{u}$ to $\widetilde{v}$ weighted by $d_{\widetilde{u}\widetilde{v}}$.

Alternatively, we could consider the symmetric quotient, which is an undirected network and has the same eigenvalues, by simply replacing $1/k_{\widetilde{u}}$ by $1/(\sqrt{\kappa_{\widetilde{u}}}\sqrt{\kappa_{\widetilde{v}}})$ in Eq.~\ref{eqn:aggregate_weights}. Note that the aggregate network quotient will also be undirected if every supra-node has the same multiplexity degree (cf.~Section \ref{section:NetworkQuotients}).

Finally, using the spectral results of Section \ref{section:NetworkQuotients}, we conclude that the adjacency (respectively Laplacian) eigenvalues of the aggregate network interlace the adjacency (respectively Laplacian) eigenvalues of the multiplex. Namely, in a multilayer network with $n$ nodes and $\widetilde{n}$ supra-nodes, the (adjacency resp.~Laplacian) eigenvalues of the aggregate network quotient $\mu_1,\ldots,\mu_{\widetilde{n}}$ satisfy
\begin{equation}
	\lambda_i \le \mu_i \le \lambda_{i+(n-\widetilde{n})} \quad \text{ for } i=1, \ldots, \widetilde{n}, 
\end{equation}
where $\lambda_1, \ldots, \lambda_n$ are the (adjacency resp.~Laplacian) eigenvalues of the multilayer network.

Observe that requiring the aggregate network to be regular, or almost regular, is in this case very restrictive, as it would require that every pair of nodes connects in the same uniform way on every layer, and thus it is not likely to occur on real-world multilayer networks.

\subsection{Layer subnetworks}\label{layerspectra}
The layers of a multiplex form evident subnetworks, and it is natural to relate the eigenvalues of each layer to the eigenvalues of the multiplex. As we have seen (Section \ref{section:Subnetworks}), the interlacing result applies to the adjacency eigenvalues of an induced subnetwork, such as the layers, and partial interlacing also holds for the Laplacian eigenvalues. More precisely, if a layer subgraph $\mathcal{G}_\alpha$ has $n_\alpha$ nodes and adjacency (resp.~Laplacian) eigenvalues $\mu_1,\ldots, \mu_\alpha$, and 
$\lambda_1, \ldots, \lambda_n$ are the adjacency (resp.~Laplacian) eigenvalues of the whole multilayer network, then
 \begin{eqnarray}
	\lambda_i \le \mu_i \le \lambda_{i+(n-n_\alpha)} && \text{ for } i=1, \ldots, n_\alpha, \ \text{ resp.}\\
	\mu_i \le \lambda_{i+(n-n_\alpha)} && \text{ for } i=1, \ldots, n_\alpha.
\end{eqnarray}

\section{Discussion and Applications}\label{section:Discussion}
From a physical point of view, the adjacency and Laplacian spectra of a network encode information on structural and dynamical properties of the system represented by the network. We now discuss some consequences and applications of the spectral results derived in the previous sections. In the following, let us write $\lambda_i(A)$ for the $i$th smallest eigenvalue of a matrix $A$.

\subsection{Adjacency spectrum}
The spectrum of the adjacency matrix is directly related to different dynamical processes that take place on the system, such as spreading processes, for which it has been shown that critical properties are related to the inverse of the largest eigenvalue of this matrix. As an example, consider a contact process on the multilayer network $\mathcal{M}$ whose dynamic is described by the equation
\begin{equation}
p_i(t+1)=\beta\sum_ja_{ij}p_j(t)-\mu\, p_i(t)
\label{contact}
\end{equation} 
in which $p_i(t)$ is the probability of node $i$ to be infected at time $t$, $\beta$ is the infection rate, $\mu$ is the recovery rate and $a_{ij}$ are the elements of the supra-adjacency matrix $A_{\mathcal{M}}$. In this model, each infected node contacts its neighbours with probability $1$, and tries to infect them. The contact between two instances of the same object in different layers is modelled in the same way as the contact between any two other nodes (the layer structure is ignored).
The critical value for which the infection survives is given by
\begin{equation}
\beta_c=\frac{\mu}{\lambda_n(A_\mathcal{M})}.
\label{betacritic}
\end{equation}

From the interlacing result for the layer subnetworks (Section \ref{layerspectra}) we have that
\begin{equation}
\lambda_{n_\alpha}(A_\alpha)\leq \lambda_n (A_\mathcal{M}),
\end{equation}
where $A_\alpha$ is the adjacency matrix of the layer $\alpha$. This means that the critical point for the multilayer network $\beta_c$ is bounded from above by the corresponding critical points of the independent layers \cite{cozzo:2013}. This implies that the multilayer network is more efficient as far as a spreading processes are concerned than the most efficient of its layers on its own.

On the other hand, if $\lambda_m$ is the largest adjacency eigenvalue of the network of layers, then (Section \ref{section:NoL})
\begin{equation}
\lambda_m\leq \lambda_n(A),
\end{equation}
which means that the connections between layers also impose constraints to the dynamics on the multilayer network. In particular, the critical point of the spreading dynamics on the multilayer network is bounded from above by the corresponding critical point of the network of layers. Note that this also explains the existence of a mixed phase \cite{Dickinson2010epidemicinterconnected}.

Consider now the same process (\ref{contact}), this time defined on the aggregate network
\begin{equation}
p_{\tilde{u}}(t+1)=\beta\sum_{\tilde{v}} a_{\tilde{u}\tilde{v}}p_{\tilde{v}}(t)-\mu \, p_{\tilde{u}}(t).
\label{agcontact}
\end{equation}
Here $a_{\tilde{u}\tilde{v}}$ are the elements of $Q(A_{\mathcal{M}})$, the adjacency matrix of the aggregate graph. The critical value is given by
\begin{equation}
\widetilde{\beta}_{c}=\frac{\mu}{\lambda_{\tilde{n}}(Q(A_\mathcal{M}))}
\label{agbetacritic}
\end{equation}
where $\tilde{n}$ is the number of supra-nodes in $\mathcal{M}$ (the size of the aggregate network). From the interlacing result we have that 
\[
\widetilde{\beta}_{c}\ge\beta_c.
\]
Therefore the spreading process on $\mathcal{M}$ is at least as efficient as the same spreading process on the aggregate network. 

Note that Equations \ref{contact} and \ref{agcontact} describe two rather different processes, that is, two different strategies that actors can adopt in order to spread information across the multilayer network. In the former, a node can infect any other node on any layer, while in the latter, each supra-node chooses at each time step with uniform probability a layer in which an instance representing it is present and then contacts all its neighbours in that layer. Our results show that the former strategy is more effective than the latter, as expressed by the relation between the critical points.

\subsection{Laplacian spectrum}
The Laplacian of a network $L=(l_{ij})$ is the operator of the dynamical process described by
\begin{equation}
\dot{p}_{ij}(t)=-\sum_{k}p_{ik}(t)\,l_{ki}
\label{multidiff}
\end{equation}
where $p_{ij}(t)$ represents the transition probability of a particle from node $i$ to node $j$ at time $t$. 
The second smallest eigenvalue of the Laplacian matrix sets the time scale of the process.  From the interlacing results applied to the Laplacian matrix we have that for any quotient
\begin{equation}\label{layernet}
\lambda_2(L_\mathcal{M})\leq \lambda_2(Q(L_\mathcal{M})).
\end{equation}
That is, the relaxation time on the multiplex is at most the relaxation time on any quotient, in particular the network of layers or the aggregate network. If we interpret $\lambda_2$ of a network Laplacian as algebraic connectivity \cite{brouwer2012spectra}, Eq.~\ref{layernet} means that the algebraic connectivity of the multilayer network is always bounded above by the algebraic connectivity of any of its quotients. 

As a more concrete example of the above, consider a multilayer network describing a time series. Then the network of layers is a path graph on $m$ nodes (the number of layers) and hence 
\begin{equation}
\lambda_2(L_\mathcal{M})\leq 2-2\cos\left(\frac{\pi}{m}\right).
\end{equation}
This means that in this case the relaxation time is proportional to the length of the time series, as one would expect.

On the other hand, the Laplacian of the aggregated network is the operator corresponding to the dynamical process described by
\begin{equation}
\dot{p}_{\tilde{u}\tilde{v}}(t)=\sum_{\tilde{k}} p_{\tilde{u}\tilde{k}}(t) \,a_{\tilde{k}\tilde{v}}-d_{\tilde{u}}\,
p_{\tilde{u}\tilde{v}}
(t)=\sum_{\tilde{k}}p_{\tilde{u}\tilde{k}}(t)\,\tilde{l}_{\tilde{k}\tilde{u}}
\label{aggregatediff}
\end{equation}
where $p_{\tilde{i}\tilde{j}}(t)$ is the transition probability of a particle from supra-node $\tilde{u}$ to supra-node $\tilde{v}$ at time $t$, $a_{\tilde{u}\tilde{k}}$ are the elements of the adjacency matrix of the aggregated contact network, $\tilde{L}=(\tilde{l}_{ij})$ is the Laplacian matrix of the aggregate contact network (i.e.~$\tilde{L}=Q(L_\mathcal{M})$) and $d_{\tilde{u}}=\sum_{\tilde{v}}a_{\tilde{u}\tilde{v}}$ is the degree  of a supra-node. 
Note that if we define the overlapping degree \cite{Battiston2013metrics} of a supra-node as
\[
o_{\tilde{u}}=\sum_{\tilde{v}} a_{\tilde{u}\tilde{v}}
\]
then we have that
\[
d_{\tilde{u}}=\frac{1}{\kappa_{\tilde{u}}}o_{\tilde{u}}.
\]
From the interlacing result for the Laplacian we have that
\begin{equation}
\lambda_2(L_\mathcal{M})\leq \lambda_2(Q(L_\mathcal{M})).
\label{final}
\end{equation}
That is, the diffusion process on the aggregate network (Eq.~\ref{aggregatediff}) is faster than the diffusion process on the entire multilayer network (Eq.~\ref{multidiff}). 

Note that in \cite{Sole-ribalta2013laplacianspectra}, in a setting in which all nodes are connected to a counterpart in each layer, the authors obtained by means of a perturbative analysis that $\lambda_2(L_\mathcal{M})\sim \lambda_2(Q(L_\mathcal{M}))$ when the diffusion parameter between layers is large enough. In \cite{Radicchi2013abrupttransition} this result is generalized (in a different framework, since they are interested in structural properties of interdependent networks) to all almost regular multilayer networks. In the framework of quotient networks that we have presented here those results arise in a very natural way. Besides, eigenvalue interlacing between multilayer and quotient eigenvalues holds for every possible inter-layer connection scheme.   

In the context of synchronization, the smallest non-zero Laplacian eigenvalue $\lambda_2$ is also related to the stability of a synchronized state  \cite{almendral-diaz2007dynspec}, and indeed the larger $\lambda_2$ is, the more stable is the synchronized state. Considering a multilayer network, the bound in (\ref{layernet}) means that the synchronized state of a system supported on the multilayer network is at most as stable as the synchronized state on any of its quotients.

\section{Conclusions} 
We have presented the network quotient formalism in the context of multilayer networks, highlighting the two most natural quotients, the network of layers, and the aggregate network. Structurally, a quotient can be thought as a dimensionality reduction of a multilayer network. In terms of spectra, we have showed that eigenvalue interlacing applies to the adjacency and Laplacian eigenvalues of any quotient, and also to subnetworks such as the layer subnetwork. We needed in particular a definition of quotient Laplacian, and to relate its eigenvalues to those of a Laplacian of the quotient network. 
We have also investigated regularity of the inter-layer connectivity, which gives a stronger lifting result on the eigenvalues and eigenvectors. Finally, we have discussed possible applications of our results, including reproducing previous results in the literature obtained by other means such as perturbative analysis. 

We argue that the notion of quotient is closely intertwined to that of multilayer network, as the latter formally corresponds to an ordinary network with additional layer and node identification information. Thinking of a network quotient as a partition or identification of its node set, a multilayer network can be indeed recovered from its underlying network and these two quotients, the network of layers, and the aggregate network. We hope that the quotient point of view will be a useful and complementary perspective in the study of multilayer networks.

\begin{acknowledgements}
We thank M.~A.~Porter and B.~MacArthur for providing useful comments on an earlier draft. E.~C.~was supported by the FPI program of the Government of Arag\'on, Spain. This work has been partially supported by the EPSRC grant EP/G059101/1 (U.K.); the MINECO grant FIS2011-25167 (Spain); Comunidad de Arag\'on (Spain) through a grant to the group FENOL, and by the EC FET-Proactive Project PLEXMATH (grant 317614).
\end{acknowledgements}

\appendix
\section{Mathematical statements}
\subsection{The quotient of a symmetric matrix}\label{section:QuotientMatrix}
The quotient formalism holds in more generality for any real symmetric matrix, as we explain here.
Let $A=(a_{ij})$ be any real symmetric $n \times n$ matrix. Write $X=\{1,2,\ldots,n\}$, let $\{X_1,\ldots,X_m\}$ be a partition of $X$, and let $n_i=|X_i|$. We write $A_{ij}$ for the submatrix consisting of the intersection of the $k$-rows and $l$-columns of $A$ such that $k \in X_i$ and $l \in X_j$. In particular, $A_{ij}$ is an $n_i \times n_j$ matrix. Define $b_{ij}$ as the average row sum of $A_{ij}$,
\begin{equation}\label{eqn:defbij}
	b_{ij}=\frac{1}{n_i} \sum_{\substack{k \in X_i\\ l \in X_j}} a_{kl}.
\end{equation}
The $m \times m$ matrix $Q_l(A)=(b_{ij})$ is called the \emph{left quotient matrix of $A$} with respect to the partition $\{X_1,\ldots,X_m\}$.

We can express $Q_l(A)$ in matrix form, as follows. Let $S=(s_{ij})$ be the $n \times m$ \emph{characteristic matrix} of the partition, that is, $s_{ij}=1$ if $i \in X_j$, and 0 otherwise. Then $S^TAS$ is the matrix of coefficient sums of the submatrices $A_{ij}$, and, hence, $Q_l(A)=\Lambda^{-1}S^TAS$, where $\Lambda=\text{diag}(n_1,\ldots,n_m)$. 

There are two alternatives to $Q_l(A)$, called the \emph{right quotient} and the \emph{symmetric quotient}, written $Q_r(A)$ and $Q_s(A)$. They correspond to replacing $1/n_i$ in (\ref{eqn:defbij}) by $1/n_j$ respectively $1/\sqrt{n_i}\sqrt{n_j}$. In matrix form, we have $Q_r(A)=S^TAS\Lambda^{-1}$ and $Q_s(A)=\Lambda^{-1/2}S^TAS\Lambda^{-1/2}$. Note that $Q_l(A)$ is the transpose of $Q_r(A)$, and they are not symmetric unless $n_i=n_j$ for all $i,j$.

Nevertheless, these three matrices have the same spectrum (the proof is straightforward):
\begin{lem*}
Let $X, D$ be $m\times m$ matrices, with $D$ diagonal. Then the matrices $DX$, $XD$ and $D^{1/2}XD^{1/2}$ have all the same spectrum. 
\end{lem*}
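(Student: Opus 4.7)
The plan is to reduce the statement to the classical fact that for square matrices $A, B$ of the same size, the products $AB$ and $BA$ have the same characteristic polynomial, and hence the same spectrum (counted with algebraic multiplicity).

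First, I would apply this classical fact with $A = D$ and $B = X$ to conclude that $DX$ and $XD$ share the same spectrum. Then I would apply it again, this time with $A = D^{1/2}$ and $B = D^{1/2}X$: the two products are $AB = D^{1/2}(D^{1/2}X) = DX$ and $BA = (D^{1/2}X)D^{1/2} = D^{1/2}XD^{1/2}$, so $DX$ and $D^{1/2}XD^{1/2}$ also have the same spectrum. Chaining the two identifications gives the lemma.

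The only real subtlety is the existence of $D^{1/2}$. In the application to quotient matrices the diagonal entries of $D = \Lambda = \mathrm{diag}(n_1,\ldots,n_m)$ are strictly positive, so $D^{1/2}$ is unambiguously defined as the positive square root; more generally one can always choose a complex diagonal square root. If one prefers to avoid square roots entirely when $D$ is invertible, the similarity transformations $DX = D(XD)D^{-1}$ and $D^{1/2}XD^{1/2} = D^{-1/2}(DX)D^{1/2}$ give the conclusion immediately. The non-invertible case is handled either by a continuity/density argument (perturb $D$ to $D+\varepsilon I$ and let $\varepsilon \to 0$, using continuity of the characteristic polynomial) or, more cleanly, by the $AB$ vs.\ $BA$ identity above, which holds without any invertibility hypothesis.

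The main potential obstacle is thus purely bookkeeping: making sure one is comparing spectra with multiplicity rather than just as sets, and handling the degenerate case where $D$ has zero entries. Invoking the $AB$/$BA$ characteristic polynomial identity sidesteps both issues at once, which is why I would organise the proof around it rather than around an explicit similarity.
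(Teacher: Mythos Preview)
Your argument is correct: reducing to the identity that $AB$ and $BA$ share the same characteristic polynomial handles all three matrices cleanly, and your remarks on the existence of $D^{1/2}$ and the non-invertible case are apt (in the paper's application $D=\Lambda^{-1}$ has strictly positive diagonal, so the issue does not arise). The paper itself gives no proof beyond declaring it ``straightforward,'' so there is no alternative approach to compare against; your write-up is exactly the kind of verification the authors leave to the reader.
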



The key result is that the eigenvalues of a quotient matrix interlace the eigenvalues of $A$, as we explain next. From now on let $Q(A)=Q_l(A)$, the quotient matrix normally referred to in the literature.

\subsection{Interlacing eigenvalues}\label{appendix:interlacing}
All the interlacing results we refer to are a consequence of the theorem below, which in turn follows from the Courant-Fisher max-min theorem. 
\begin{thm*}[\protect{\cite[Thm.~2.1(i)]{haemers1995interlacing}}] Let $A$ be a symmetric matrix of order $n$, and let $U$ be an $n\times m$ matrix such that $U^TU=I$. Then the eigenvalues of $U^TAU$ interlace those of $A$.
\end{thm*}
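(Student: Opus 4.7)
The plan is to prove this via the Courant--Fischer min-max characterisation of eigenvalues. Write the eigenvalues of $A$ in increasing order as $\lambda_1 \le \cdots \le \lambda_n$ and those of $B := U^T A U$ as $\mu_1 \le \cdots \le \mu_m$; the goal is $\lambda_i \le \mu_i \le \lambda_{i+(n-m)}$ for $i=1,\ldots,m$. The key preliminary observation is that the hypothesis $U^T U = I$ says the columns of $U$ are orthonormal, so $y \mapsto Uy$ is an isometric embedding $\mathbb{R}^m \hookrightarrow \mathbb{R}^n$. Setting $x = Uy$, one gets $x^T x = y^T y$ and $x^T A x = y^T B y$ identically, so the Rayleigh quotients of $A$ on the column space of $U$ coincide exactly with the Rayleigh quotients of $B$. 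Moreover, for any subspace $T \subseteq \mathbb{R}^m$, the image $S := U(T)$ is a subspace of $\mathbb{R}^n$ with $\dim S = \dim T$ (since $U$ has trivial kernel), and the extrema of the Rayleigh quotient over $S$ and over $T$ agree.

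For the lower bound $\lambda_i \le \mu_i$, I would use the standard form
\[
\lambda_i = \min_{\dim S = i} \ \max_{0 \ne x \in S} \frac{x^T A x}{x^T x}.
\]
Pick $T \subseteq \mathbb{R}^m$ of dimension $i$ achieving the corresponding minimum for $B$ (so that the max of the Rayleigh quotient of $B$ on $T$ equals $\mu_i$), and lift to $S = U(T)$. By the isometry observation, the maximum of the Rayleigh quotient of $A$ on $S$ equals $\mu_i$, and $S$ is a candidate subspace in the outer minimum, so $\lambda_i \le \mu_i$.

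For the upper bound $\mu_i \le \lambda_{i+(n-m)}$, I would switch to the dual max-min form
\[
\lambda_k = \max_{\dim S = n-k+1} \ \min_{0 \ne x \in S} \frac{x^T A x}{x^T x},
\]
which, with $k = i+(n-m)$, becomes a maximum over $(m-i+1)$-dimensional subspaces of $\mathbb{R}^n$. Taking $T \subseteq \mathbb{R}^m$ of dimension $m-i+1$ realising $\mu_i$ in the analogous max-min formula for $B$, and lifting to $S = U(T)$, the isometry again transports the minimum of the Rayleigh quotient on $T$ (which is $\mu_i$) to $S$ without change, yielding $\lambda_{i+(n-m)} \ge \mu_i$. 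The two inequalities together give the claimed interlacing. There is essentially no serious obstacle: both bounds follow from the same one-line lifting argument applied to the two dual Courant--Fischer formulas, and the only point that needs to be stated carefully is that $U$ preserves both the inner product and the dimension of a subspace, which is immediate from $U^T U = I$.
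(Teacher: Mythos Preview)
Your proof is correct and is precisely the Courant--Fischer argument the paper alludes to; the paper itself does not spell out the details but merely states that the theorem ``follows from the Courant--Fisher max-min theorem'' and cites Haemers. Your write-up faithfully fills in that omitted argument, with both the min-max and dual max-min directions handled cleanly via the isometric lifting $y \mapsto Uy$.
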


Observe that the matrix $U^TAU$ is symmetric, and hence it has real eigenvalues.

If $U$ is the characteristic matrix of a subset $\alpha \subset \{1,2,\ldots,n\}$, that is, $U=(u_{ij})$ of size $n \times |\alpha|$ and non-zero entries $u_{ii}=1$ if $i \in \alpha$, then $U^TAU$ equals the principal submatrix of $A$ with respect to $\alpha$. As $U^TU$ is the identity, we conclude from the theorem above:

\begin{cor*}[\protect{\cite[Cor.~2.2]{haemers1995interlacing}}]
Let $B$ be a principal submatrix of $A$. Then the eigenvalues of $B$ interlace the eigenvalues of $A$.
\end{cor*}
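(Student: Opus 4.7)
The plan is to reduce the statement to a direct application of the interlacing theorem stated just above it, by producing a suitable $n \times m$ matrix $U$ with $U^T U = I$ for which $U^T A U$ coincides with the given principal submatrix $B$.

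First, I would fix notation. A principal submatrix of $A$ is specified by a subset $\alpha \subseteq \{1,2,\ldots,n\}$ of size $m = |\alpha|$: $B$ is obtained from $A$ by retaining exactly the rows and columns indexed by $\alpha$. The natural candidate for $U$ is the $n \times m$ characteristic matrix of $\alpha$, namely the matrix whose columns are the standard basis vectors $e_i$ of $\mathbb{R}^n$ for $i \in \alpha$ (in some fixed order matching the indexing of $B$). Concretely, $u_{ij} = 1$ if the $i$-th row of $U$ corresponds to the $j$-th element of $\alpha$, and $u_{ij} = 0$ otherwise. This matrix is already mentioned in the paragraph preceding the corollary, so I would simply confirm the two properties we need.

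Next, I would verify that $U^T U = I_m$ and $U^T A U = B$ by direct computation. Orthonormality is immediate: the columns of $U$ are distinct standard basis vectors of $\mathbb{R}^n$, so their pairwise inner products give the $m \times m$ identity. For the second identity, one observes that left-multiplication by $U^T$ selects the rows indexed by $\alpha$, and right-multiplication by $U$ selects the columns indexed by $\alpha$; hence $U^T A U$ is precisely the $m \times m$ principal submatrix of $A$ on the index set $\alpha$, which is $B$.

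Finally, with these two facts in hand, I would invoke the theorem from Appendix \ref{appendix:interlacing}: since $U^T U = I$, the eigenvalues of $U^T A U$ interlace those of $A$. Substituting $U^T A U = B$ completes the proof. There is essentially no obstacle here; the only subtlety worth flagging is that $B$ is symmetric (so its eigenvalues are real and the interlacing inequality makes sense), which follows from $B = U^T A U$ together with the symmetry of $A$.
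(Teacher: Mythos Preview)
Your proposal is correct and follows exactly the paper's own argument: take $U$ to be the characteristic matrix of the index set $\alpha$, check $U^TU=I$ and $U^TAU=B$, and apply the interlacing theorem. There is nothing to add.
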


On the other hand, if $S$ is the characteristic matrix of the partition, then $S^TS=\Lambda$ is a diagonal non-singular matrix, and hence $U=S\Lambda^{-1/2}$ satisfies the hypothesis of the theorem. We conclude that the eigenvalues of $U^TAU=\Lambda^{-1/2}S^TAS\Lambda^{-1/2}$ interlace those of $A$. Using the Lemma in \ref{section:QuotientMatrix}, we conclude:
\begin{cor*}[\protect{\cite[Cor.~2.3(i)]{haemers1995interlacing}}]
Let $B$ be a quotient matrix of $A$ with respect to some partition. Then the eigenvalues of $B$ interlace the eigenvalues of $A$.
\end{cor*}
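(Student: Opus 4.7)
The plan is to reduce the corollary to the Theorem of Appendix \ref{appendix:interlacing} by producing an $n \times m$ matrix $U$ with $U^T U = I$ such that $U^T A U$ has the same spectrum as $B$. The natural candidate is built from the characteristic matrix $S$ of the partition, rescaled to make its columns orthonormal. Concretely, I would set
\[
    U = S \Lambda^{-1/2},
\]
where $\Lambda = \text{diag}(n_1, \ldots, n_m)$, as suggested by the identity $S^T S = \Lambda$ recorded in Appendix \ref{section:QuotientMatrix}.

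First I would verify $U^T U = I$: this is a direct calculation, $U^T U = \Lambda^{-1/2} S^T S \, \Lambda^{-1/2} = \Lambda^{-1/2} \Lambda \Lambda^{-1/2} = I_m$, so $U$ meets the hypothesis of the Theorem. Next I would compute
\[
    U^T A U = \Lambda^{-1/2} S^T A S \, \Lambda^{-1/2} = Q_s(A),
\]
which is exactly the symmetric quotient matrix from Appendix \ref{section:QuotientMatrix}. Applying the Theorem to this $U$ gives immediately that the eigenvalues of $Q_s(A)$ interlace those of $A$.

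Finally I would transfer the interlacing conclusion from $Q_s(A)$ to an arbitrary quotient matrix $B \in \{Q_l(A), Q_r(A), Q_s(A)\}$ by invoking the Lemma in Appendix \ref{section:QuotientMatrix}. Writing $X = S^T A S$ and $D = \Lambda$, the three quotient matrices are respectively $D^{-1} X$, $X D^{-1}$ and $D^{-1/2} X D^{-1/2}$, so (applied to $D^{-1}$ instead of $D$) the Lemma yields that they share a spectrum with $Q_s(A)$, and the interlacing statement therefore carries over to $B$.

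The only mild subtlety — and the one point where care is needed — is that $Q_l(A)$ and $Q_r(A)$ are not symmetric in general, so a priori it is not automatic that they even have real eigenvalues; the Lemma is what legitimately transports the spectral conclusion from the manifestly symmetric matrix $Q_s(A) = U^T A U$ to the non-symmetric left/right quotients. Past that observation, no step involves more than a one-line calculation, so I do not expect a genuine obstacle.
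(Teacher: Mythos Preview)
Your proposal is correct and follows exactly the paper's approach: set $U=S\Lambda^{-1/2}$, check $U^TU=I$ via $S^TS=\Lambda$, apply the Theorem to get interlacing for $U^TAU=Q_s(A)$, and then invoke the Lemma of Appendix~\ref{section:QuotientMatrix} to carry the conclusion to $Q_l(A)$ and $Q_r(A)$. Your added remark about real eigenvalues for the non-symmetric quotients is a nice clarification but not a departure from the paper's argument.
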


\subsection{Equitable partitions}\label{appendix:equitable}
A partition of the node set is called \emph{equitable} if, for each $i, j$, the row sum of the submatrix $A_{ij}$ is constant, that is, 
\begin{equation}\label{eqn:equitable}
	\sum_{l \in X_j} a_{kl} = \sum_{l \in X_j} a_{k'l} \quad \text{for all } k,k'\in X_i.
\end{equation}	
This can be expressed in matrix form as $A\,S=S\,Q(A)$. We call the matrix $Q(A)$ a \emph{regular quotient} if it is the quotient of an equitable partition.

If the quotient is regular, then the eigenvalues of $Q(A)$ not only interlace but are a subset of the eigenvalues of $A$. In fact, there is a \emph{lifting} relating both sets of eigenvalues, as we explain now. 

If $v, w$ are column vectors of size $m$ and $n$, we say that $Sv$ represents the vector $v$ \emph{lifted} to $A$, and $S^Tw$ the vector $w$ \emph{projected} to $Q(A)$. The vector $Sv$ has constant coordinates on each $X_i$, while the vector $S^Tw$ is created by adding the coordinates on each $X_i$. The vector $w$ is called \emph{orthogonal to the partition} if $S^Tw=0$, that is, the sum of the coordinates over each $X_i$ is zero.

If the quotient is regular, the spectrum of $A$ decomposes into the spectrum of $B$ lifted to $A$ (i.e.~eigenvectors constant on each $X_i$), and the remaining spectrum is orthogonal to the partition (i.e.~eigenvectors with coordinates adding to zero on each $X_i$):

\begin{thm*}
Let $B$ be the quotient matrix of $A$ with respect to an equitable partition with characteristic matrix $S$. Then the spectrum of $B$ is a subset of the spectrum of $A$. More precisely, $(\lambda, v)$ is an eigenpair of $B$ if and only if $(\lambda, Sv)$ is an eigenpair of $A$.\\
Moreover, there is an eigenbasis of $A$ of the form $\{Sv_1,\ldots, Sv_m,w_1,\ldots, w_{n-m}\}$ such that $\{v_1, \ldots, v_m\}$ is any eigenbasis of $B$, and $S^Tw_i=0$ for all $i$.
\end{thm*}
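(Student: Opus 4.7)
The plan is to derive the theorem from the single matrix identity $AS=SB$, which is the algebraic content of the equitable-partition condition. By hypothesis, for any $k\in X_j$ the row sum $\sum_{l\in X_i}a_{kl}$ depends only on $j$, and equals $b_{ji}$ by definition of $B$. Reading off entries shows that the $k$th row of $AS$ equals the $j$th row of $B$, which is precisely the $k$th row of $SB$. Hence $AS=SB$. Everything else will follow from this identity and elementary linear algebra.

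For the biconditional, if $Bv=\lambda v$ then $A(Sv)=(AS)v=(SB)v=\lambda(Sv)$, and $Sv\neq 0$ because the columns of $S$ (indicator vectors of disjoint nonempty parts) are linearly independent. Conversely, if $A(Sv)=\lambda Sv$ for some $v\neq 0$, then $S(Bv)=(AS)v=\lambda Sv=S(\lambda v)$, so the injectivity of $S$ forces $Bv=\lambda v$.

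For the eigenbasis, I would decompose $\mathbb{R}^n=\Im(S)\oplus\Ker(S^T)$ as orthogonal complements; since $S$ has full column rank, the summands have dimensions $m$ and $n-m$. Both subspaces are $A$-invariant: $\Im(S)$ because $A(Sv)=S(Bv)\in\Im(S)$, and $\Ker(S^T)$ because transposing $AS=SB$ (using $A=A^T$) yields $S^TA=B^TS^T$, so $S^Tw=0$ implies $S^T(Aw)=B^T(S^Tw)=0$. Since $A$ is symmetric, its restriction to $\Ker(S^T)$ is self-adjoint for the induced inner product and therefore diagonalizable, giving an eigenbasis $w_1,\dots,w_{n-m}$ of $\Ker(S^T)$ that automatically satisfies $S^Tw_i=0$. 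On $\Im(S)$, the biconditional says the isomorphism $v\mapsto Sv$ intertwines $B$ with $A|_{\Im(S)}$, so any eigenbasis $\{v_1,\dots,v_m\}$ of $B$ lifts to an eigenbasis $\{Sv_1,\dots,Sv_m\}$ of $A|_{\Im(S)}$. Concatenating yields the claimed eigenbasis of $A$.

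The main subtlety I would flag is that $B=Q_l(A)$ is not in general symmetric, so one must first know it admits an eigenbasis of $\mathbb{R}^m$. This is guaranteed by the observation in Appendix~\ref{section:QuotientMatrix}: the relation $B=\Lambda^{-1/2}Q_s(A)\Lambda^{1/2}$ makes $B$ similar to the symmetric matrix $Q_s(A)$, hence diagonalizable with real eigenvalues. Once this is in place, the argument above is purely formal and the identity $AS=SB$ carries the entire proof.
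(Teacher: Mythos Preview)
Your proof is correct and follows essentially the same route as the paper: the identity $AS=SB$ (the paper writes ``$SA=SB$'' in the proof, an evident typo for the earlier $AS=SQ(A)$) plus the orthogonal decomposition $\mathbb{R}^n=\Im(S)\oplus\Ker(S^T)$. Your version is in fact more complete than the paper's sketch---you make explicit the $A$-invariance of both summands and the diagonalizability of $B$ via similarity to $Q_s(A)$, points the paper leaves implicit when it simply says one can ``complete'' $\{Sv_1,\dots,Sv_m\}$ to an eigenbasis.
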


\begin{proof}The first part follows easily from the identity $SA=SB$, which is equivalent to Eq.~\ref{eqn:equitable} (note that $Sv\neq 0$ as $\Ker(S)=0$). For the second part, note that $S$ is an isomorphism onto $\Im(S)$, as it has trivial kernel, so $\{Sv_1,\ldots,Sv_m\}$ is a basis of $\Im(S)$. It is easy to show that the orthogonal complement $\Im(S)^\perp$ equals $\Ker(S^T)$, hence we can complete the linearly independent set of eigenvectors $\{Sv_1,\ldots,Sv_m\}$ to a eigenbasis of $\mathbb{R}^n=\Im(S) \oplus \Im(S)^\perp$.
\end{proof}


\subsection{Laplacian eigenvalues} \label{appendix:Laplacian}
\subsubsection{Quotients}\label{appendix:Laplacian_quotients}
We want to show that the Laplacian of a quotient graph is the quotient of the Laplacian matrix, as this will allow us to extend the interlacing results to the Laplacian eigenvalues. First, we need to clarify what we mean by the Laplacian of a non-symmetric matrix.

If $A=(a_{ij})$ is a real symmetric (adjacency) matrix, define the \emph{node out-degrees} as
\begin{equation}
	d^{out}_i=\sum_j a_{ij}	\quad \text{(row sum)}.
\end{equation}
The \emph{out-degree Laplacian} is the matrix 
\begin{equation}
	L^{out}=D^{out}-A,
\end{equation}
where $D^{out}$ is the diagonal matrix of the out-degrees. We define $d_i^{in}$, $D^{in}$ and the \emph{in-degree Laplacian} $L^{in}$ analogously. Note that both Laplacian matrices ignore the diagonal values of $A$. (If $A$ is the adjacency matrix of a graph, we say that the Laplacian `ignores self-loops'.)

Consider the left and right quotients of $A$ with respect to a given partition. Observe that the row sums of $Q_l(A)$ are
\begin{equation}
	\overline{d_i} =\frac{1}{n_i} \sum_{k \in V_i} d_k,
\end{equation}
the average node degree in $V_i$.  Let $\overline{D}$ be the diagonal matrix of the average node degrees. Then we define the \emph{quotient Laplacian} as the matrix 
\begin{equation}
	L_\mathcal{Q} = \overline{D} - Q_l(A),
\end{equation}
that is, the out-degree Laplacian of the left quotient matrix. Alternatively, we could have defined $L_\mathcal{Q}$ as the in-degree Laplacian of the right quotient matrix, giving a transpose matrix with the same eigenvalues. (Note that there is no obvious way of interpreting the symmetric quotient $Q_s(L)$ as the Laplacian of a graph.) 

Now we can prove that the Laplacian of the quotient is the quotient of the Laplacian, in the following sense.
\begin{thm*}
Let $\mathcal G$ be a graph with adjacency matrix $A$ and Laplacian matrix $L$. Then:
\[
	L^{out}(Q_l(A))=Q_l(L).
\]
The analogous result holds for the right quotients and the in-degree Laplacian.
\end{thm*}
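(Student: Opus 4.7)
The plan is to use the matrix formula $Q_l(M) = \Lambda^{-1} S^T M S$ for any symmetric matrix $M$ (with $S$ the characteristic matrix of the partition and $\Lambda = S^T S = \mathrm{diag}(n_1,\ldots,n_m)$), together with the linearity of $Q_l$ in $M$ and the definition $L = D - A$, to reduce the claim to a single diagonal-matrix identity.

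First I would write
\begin{equation*}
	Q_l(L) = \Lambda^{-1} S^T (D-A) S = \Lambda^{-1} S^T D S - Q_l(A),
\end{equation*}
so that matching with $L^{out}(Q_l(A)) = \overline{D} - Q_l(A)$ reduces everything to showing $\Lambda^{-1} S^T D S = \overline{D}$. The key computation is a direct expansion: since $(S)_{ki} = 1$ iff $k \in V_i$ and $D$ is diagonal, the $(i,j)$-entry of $S^T D S$ equals $\sum_k [k \in V_i]\, d_k\, [k \in V_j] = \sum_{k \in V_i \cap V_j} d_k$, which vanishes unless $i=j$ (the $V_i$ are disjoint) and equals $\sum_{k \in V_i} d_k$ when $i=j$. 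Dividing by $n_i$ recovers $\overline{d_i}$, hence $\Lambda^{-1} S^T D S = \overline{D}$.

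Separately, I would verify that the $\overline{D}$ appearing in the definition of $L^{out}(Q_l(A))$ is the same matrix: by definition this is the diagonal of row sums of $Q_l(A)$, and a quick swap of summations gives $\sum_j (Q_l(A))_{ij} = \frac{1}{n_i} \sum_{k \in V_i} \sum_l a_{kl} = \frac{1}{n_i} \sum_{k \in V_i} d_k = \overline{d_i}$. Combining these two identifications yields $L^{out}(Q_l(A)) = \overline{D} - Q_l(A) = Q_l(L)$, as required. The right-quotient/in-degree version is obtained by transposing all these identities, using $Q_r(M) = S^T M S\, \Lambda^{-1}$ and the fact that $D$ is symmetric so $S^T D S$ is unchanged.

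There is no substantive obstacle here; the proof is essentially a matrix-level bookkeeping argument. The only point requiring a little care is the notational consistency between the abstract formula $Q_l(M) = \Lambda^{-1} S^T M S$ from Appendix \ref{section:QuotientMatrix} and the concrete definition of $\overline{D}$ via row sums of $Q_l(A)$ in Eq.~\ref{eqn:QuotientLaplacian}: both routes must give the same diagonal matrix of layer-average degrees, and it is this coincidence that makes the statement ``the Laplacian of the quotient equals the quotient of the Laplacian'' meaningful.
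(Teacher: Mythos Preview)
Your proposal is correct and follows essentially the same approach as the paper: expand $Q_l(L)=\Lambda^{-1}S^T(D-A)S$ by linearity, identify $\Lambda^{-1}S^TDS$ with $\overline{D}$, and obtain the right-quotient statement by transposition. The paper is terser (it simply asserts $\Lambda^{-1}S^TDS=\overline{D}$ without unpacking the entrywise computation you give), but the argument is the same.
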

\begin{proof}
By definition (see Appendix \ref{section:QuotientMatrix}), 
\begin{align*}
	Q_l(L) &= \Lambda^{-1}S^TLS = \Lambda^{-1}S^T(D-A)S = \\
	&= \Lambda^{-1}S^TDS - \Lambda^{-1}S^TAS = \overline{D} - Q_l(A). 
\end{align*}
The second statement follows by transposing the equation above.
\end{proof}
This theorem allows us to use the interlacing results of Appendix \ref{appendix:interlacing} for Laplacian eigenvalues.

We finish by studying equitable partitions in the context of Laplacian matrices. We demonstrate that a partition being regular for the Laplacian matrix is equivalent to the partition being almost regular for the adjacency matrix. In particular, the spectral results of Appendix \ref{appendix:equitable} will hold for almost regular quotients and Laplacian eigenvalues. 

\begin{thm*}
Let $\mathcal G$ be a graph with adjacency matrix $A$ and Laplacian matrix $L$. Then a partition is equitable with respect to $L$ if and only if it is almost equitable with respect to $A$.
\end{thm*}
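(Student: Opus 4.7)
The plan is to unfold both definitions and compare them block by block, splitting into the off-diagonal case $i\ne j$ and the diagonal case $i=j$. Recall that the partition being equitable with respect to a matrix $M$ means that, for every $i,j$ and every $k,k'\in V_i$, $\sum_{l\in V_j} m_{kl} = \sum_{l\in V_j}m_{k'l}$; almost equitability for $A$ requires this only for $i\ne j$.

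For $i\ne j$, the block $L_{ij}$ coincides with $-A_{ij}$ entry-by-entry (no diagonal term of $L$ lies in this block, so the $D$ part of $L=D-A$ plays no role). Hence the row-sum condition for $L_{ij}$ is literally the negative of the row-sum condition for $A_{ij}$, and the two are equivalent. This immediately shows the ``$i\ne j$'' part of both implications.

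For $i=j$, I would compute the row sum of $L_{ii}$ at a row indexed by $k\in V_i$ using $L=D-A$:
\begin{equation*}
\sum_{l\in V_i}l_{kl} = d_k - \sum_{l\in V_i}a_{kl} = \sum_{l\notin V_i}a_{kl} = \sum_{j\ne i}\sum_{l\in V_j}a_{kl},
\end{equation*}
where the second equality uses $d_k=\sum_l a_{kl}$ (and the self-loop $a_{kk}$ cancels against the corresponding term in $l_{kk}=d_k-a_{kk}$). This identity is the key computation. From it the forward direction is automatic: if the partition is equitable for $L$, then in particular the off-diagonal row-sums agree, which is exactly the almost equitable condition on $A$, and no condition is required for the diagonal block in the ``almost'' setting. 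Conversely, if the partition is almost equitable for $A$, then each inner sum $\sum_{l\in V_j}a_{kl}$ (for $j\ne i$) is independent of $k\in V_i$, so summing over $j\ne i$ shows $\sum_{l\in V_i}l_{kl}$ is independent of $k$, giving constancy on the diagonal block of $L$ as well.

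Putting the two cases together yields the equivalence. I do not expect any real obstacle here: the only point that requires care is the bookkeeping with self-loops in the $i=j$ computation, to make sure that the diagonal entry $l_{kk}=d_k-a_{kk}$ combines correctly with the off-diagonal entries $l_{kl}=-a_{kl}$ ($l\in V_i\setminus\{k\}$) to produce the clean expression $\sum_{l\notin V_i}a_{kl}$. Once that identity is in hand, both implications follow in one line each.
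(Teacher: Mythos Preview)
Your proposal is correct and follows essentially the same argument as the paper: both split into the off-diagonal case (where $L_{ij}=-A_{ij}$ makes the two row-sum conditions trivially equivalent) and the diagonal case (where the fact that each full row of $L$ sums to zero forces the $L_{ii}$ row sum to be determined by the off-diagonal blocks). Your treatment is slightly more explicit in tracking the self-loop term through the computation $d_k-\sum_{l\in V_i}a_{kl}=\sum_{l\notin V_i}a_{kl}$, but this is exactly the content of the paper's one-line observation that the row sums of $L$ vanish.
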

\begin{proof}
By relabeling the nodes if necessary, we can assume the block decomposition
\begin{equation}
	A=
	\begin{pmatrix}
		A_{11} &\ldots &A_{1m}\\
		\vdots & \ddots &\vdots\\
		A_{m1} &\ldots & A_{mm}
	\end{pmatrix},	
	\label{subnetmat}
\end{equation}
where the $n_i \times n_j$ submatrix $A_{ij}$ represents the edges from $V_i$ to $V_j$. The matrix $L$ has then a similar block decomposition into submatrices $L_{ij}$. As $L=D-A$ and $D$ is diagonal, we have $L_{ij}=-A_{ij}$ for all $i\neq j$. In particular, the row sums of $L_{ij}$ are constant if and only if the row sums of $A_{ij}$ is constant, for all $i \neq j$. On the other hand, as the row sums in $L$ are zero, the row sums in $L_{ii}$ equals the sum of the row sums of the matrices $L_{ij}$ for $j\neq i$, and the result follows.
\end{proof}

\subsubsection{Subnetworks}\label{appendix:Laplacian_subnetworks}
The adjacency eigenvalues of an induced subgraph interlace those of the graph, as per the first corollary in Appendix \ref{appendix:interlacing}. However, they do not behave well for factor subgraphs \cite{haemers1995interlacing}. For the Laplacian eigenvalues, they behave well for general subgraphs, in the following sense.

\begin{thm*}
Let $\lambda_1 \le \ldots \le \lambda_n$ be the Laplacian eigenvalues of a graph $\mathcal G$ on $n$ vertices, and let $\mu_1 \le \ldots \le\mu_m$ be the Laplacian eigenvalues of a general subgraph on $m$ vertices. Then $\mu_i \le \lambda_{i+(n-m)}$ for all $1\le i\le m$.
\end{thm*}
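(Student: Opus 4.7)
The plan is to realize $L_{\mathcal G}$ as a positive semi-definite (PSD) perturbation of a suitably embedded subgraph Laplacian, and then invoke Weyl's monotonicity inequality to transfer eigenvalue bounds. Let $\mathcal G'=(V',E')$ denote the general subgraph, with $|V'|=m$ and $E'$ a subset of the edges of $\mathcal G$ between vertices in $V'$. First, I would embed $L_{\mathcal G'}$ into $\mathbb R^n$ by inserting $n-m$ zero rows and columns indexed by the vertices of $\mathcal G$ that are not in $V'$; call the resulting $n\times n$ symmetric matrix $\widehat L'$. Then $\widehat L'$ is PSD, its spectrum consists of $\mu_1,\ldots,\mu_m$ together with $n-m$ extra zeros, and---crucially, because every $\mu_i\ge 0$---when these eigenvalues are sorted in increasing order the $n-m$ padding zeros fall into the bottom positions, so that $\lambda_{n-m+i}(\widehat L')=\mu_i$ for every $i=1,\ldots,m$.

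The next step is to show that the difference $M:=L_{\mathcal G}-\widehat L'$ is PSD. This uses the standard decomposition of a weighted Laplacian into edge Laplacians: each edge $e=\{u,v\}$ with weight $w_e\ge 0$ contributes a rank-one PSD summand $L_e$ supported on rows and columns $u,v$ (with $+w_e$ on the diagonal and $-w_e$ off it). Since $E'\subseteq E(\mathcal G)$, regrouping the sum gives
\[
L_{\mathcal G}=\widehat L'+\sum_{e\in E(\mathcal G)\setminus E'} L_e,
\]
so $M$ is a sum of PSD matrices and hence PSD. Weyl's monotonicity inequality for Hermitian matrices then yields $\lambda_j(L_{\mathcal G})\ge\lambda_j(\widehat L')$ for every $1\le j\le n$. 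Setting $j=n-m+i$ and combining with the identification above gives the claimed bound $\mu_i\le\lambda_{i+(n-m)}(L_{\mathcal G})$.

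The only delicate point is the index alignment $\lambda_{n-m+i}(\widehat L')=\mu_i$, which relies essentially on $L_{\mathcal G'}$ being PSD; this is precisely why the analogous one-sided interlacing fails for adjacency spectra of general (non-induced) subgraphs, since negative eigenvalues would force some padded zeros into the middle of the sorted spectrum. An alternative route that avoids the embedding is to pass through the principal submatrix $L_{\mathcal G}|_{V'}$: Cauchy interlacing (the first Corollary in Appendix~\ref{appendix:interlacing}) gives $\lambda_i(L_{\mathcal G}|_{V'})\le\lambda_{i+(n-m)}(L_{\mathcal G})$, while $L_{\mathcal G}|_{V'}-L_{\mathcal G'}$ decomposes as the Laplacian of those intra-$V'$ edges of $\mathcal G$ that are absent from $E'$, plus a non-negative diagonal matrix recording the ``external'' edge weights from $V'$ to the rest of $\mathcal G$, and is therefore PSD; a second application of Weyl then closes the argument.
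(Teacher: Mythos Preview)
Your argument is correct. The embedding step, the edge-Laplacian decomposition $L_{\mathcal G}=\sum_{e}L_e$, the PSD-ness of $M=L_{\mathcal G}-\widehat L'$, and the application of Weyl's monotonicity are all sound, and the index alignment $\lambda_{n-m+i}(\widehat L')=\mu_i$ does follow from $\mu_1\ge 0$. Your remark about why only one side of interlacing survives (and why the adjacency case fails) is also spot on.

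The paper takes a different route. It writes $L_{\mathcal G}=BB^{T}$ with $B$ the (signed, weighted) incidence matrix, passes to the $|E|\times |E|$ matrix $B^{T}B$, and observes that deleting an edge corresponds to deleting a row and column of $B^{T}B$; Cauchy interlacing for principal submatrices then controls the nonzero spectrum, and one translates back to $L$ using that $BB^{T}$ and $B^{T}B$ share their nonzero eigenvalues. Vertex deletion is handled by first deleting the incident edges and noting that the now-isolated vertex contributes only a zero eigenvalue. Both proofs exploit the same structural fact---that the Laplacian is an additive, PSD sum over edges---but package it differently: the paper routes the argument through the edge space and the Cauchy interlacing corollary already proved in Appendix~\ref{appendix:interlacing}, whereas you stay in the vertex space and invoke Weyl directly. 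Your approach is arguably more self-contained (no bookkeeping of zero eigenvalues when passing between $BB^{T}$ and $B^{T}B$), while the paper's has the virtue of reusing the interlacing machinery already in place. Your alternative via the principal submatrix $L_{\mathcal G}|_{V'}$ is a third correct variant, closer in spirit to the paper's use of Cauchy interlacing but applied on the vertex side.
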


\begin{proof}The proof uses the same ideas described in Appendix \ref{appendix:interlacing} (cf.~\cite[Prop.~3.2.1(ii)]{brouwer2012spectra}). Write $L=BB^T$ where $B$ is the incidence matrix of the graph. The matrices $BB^T$ and $B^TB$ have the same non-zero spectrum (probably with different multiplicities), and removing an edge is equivalent to taking a principal submatrix of $B^TB$, hence interlacing applies. To remove a vertex, first remove all the incident edges, then removing the vertex corresponds to removing a zero eigenvalue. \end{proof}

\end{document}